\theoremstyle{plain}
\newtheorem{theorem}{Theorem}
\newtheorem{lemma}{Lemma}
\theoremstyle{definition}
\theoremstyle{remark}
\newcommand{\nm}[1]{\textcolor{brown}{\textbf{[NM: #1]}}}
\title{Trajectory Optimization for Rotary-Wing UAVs in Wireless Networks with Random Requests}
\author{Matthew Bliss and Nicol\`{o} Michelusi
\thanks{Bliss and Michelusi are with the School of Electrical and Computer Engineering, Purdue University, West Lafayette, IN, USA; emails: \{blissm,michelus\}@purdue.edu.}
\vspace{-5mm}
}
\begin{document}
\setulcolor{red}
\setul{red}{2pt}
\setstcolor{red}
\maketitle
\thispagestyle{empty}
\pagestyle{empty}
\begin{abstract}
This paper studies the trajectory optimization problem in a scenario where a single rotary-wing UAV acts as a relay of data payloads for downlink transmission requests generated randomly by two ground nodes (GNs) in a wireless network. The goal is to optimize the UAV trajectory in order to minimize the expected average communication delay to serve these random requests. It is shown that the problem can be cast as a semi-Markov decision process (SMDP), and the resulting minimization problem is solved via multi-chain policy iteration. 
The optimality of a \emph{two-scale} optimization approach is proved: the optimal trajectory in the communication phase greedily minimizes the communication delay of the current request while moving between the current start position and a target end position (inner optimization); the end positions are selected to minimize the expected average long-term delay in the SMDP (outer optimization). Numerical simulations show that the expected average delay is minimized when the UAV moves towards the geometric center of the GNs during phases in which it is not actively servicing transmission requests, and demonstrate significant improvements over sensible heuristics. Finally, it is revealed that the optimal end positions of communication phases become increasingly independent of the data payload, for large data payload values.
\end{abstract}
\vspace{-1mm}
\begin{IEEEkeywords}
UAV-assisted wireless networks, adaptive trajectory optimization, semi-Markov decision process
\end{IEEEkeywords}
\vspace{-4mm}
\section{Introduction} \label{section_intro}
Recently, much research has gone into UAVs operating in wireless networks \cite{FundTrade,EnEffDeployment,EnMin,TrajOptRL,OptTransport}. The drive for this is due to the unique benefits that UAVs acting as flying base stations, mobile relays, etc., provide in enhancing the overall network performance, thanks to their unique advantages over terrestrial counterparts in terms of mobility, maneuverability, and improved line-of-sight (LoS) link probability \cite{FundTrade}. However, the design of UAV deployment strategies comes with challenges, namely the determination of optimal positioning or trajectories in the face of constraints imposed on UAV energy consumption, network throughput, and/or delay requirements \cite{FundTrade,EnEffDeployment,EnMin,TrajOptRL}.

Some research has focused on the trajectory optimization under energy constraints, as in \cite{EnEffDeployment} and \cite{EnMin}. In \cite{LOSMapApproach}, the fine-grained structure of LoS conditions is exploited to position UAVs optimally with the goal to maximize throughput. In \cite{TrajOptRL}, a model-free Q-learning approach is taken in the trajectory design so as to maximize the transmission sum-rate.

All of these efforts consider situations that are solved in the \emph{offline} case, i.e.,  the pattern of transmission requests is known in advance, so that the trajectory may be pre-planned accordingly. However, this may be impractical as transmission requests are often random and cannot be determined in advance. In these cases, trajectory design is much more challenging, since it must be continuously adjusted based on the realization of these random processes, and incorporate the uncertainty in the future evolution of the system dynamics. In this paper, we investigate this problem by developing policies that adapt the trajectory based on the random realization of downlink transmission requests generated by two ground nodes (GNs), so as to optimize the average long-term performance.

To further motivate the need for this new formulation, consider the scenario depicted in Fig. \ref{fig:SystemModel}.
In this context, the minimum communication delay to serve GN$_1$ is achieved by flying towards it to improve the distance-dependent pathloss. With this design, for a sufficiently large data payload, the UAV will terminate the data transmission hovering above GN$_1$, where channel conditions are most favorable. However, if the UAV is to service a random request generated by GN$_2$ shortly after terminating the transmission to GN$_1$, the delay incurred to serve this second request may be large due to the large distance that separates the UAV from GN$_2$, causing severe pathloss conditions. In other words, 
 under random transmission requests, the greedy delay minimization to serve a certain request may lead the UAV to a position where subsequent random requests cannot be served effectively, yielding poor delay performance in an average long-term sense. This example points to the need to incorporate the random nature of transmission requests in the trajectory design.
  
To address this question, we consider a scenario in which an UAV is serving two GNs far apart, and receives transmission requests according to a Poisson random process. We formulate the problem as that of designing an \emph{adaptive} trajectory, so as to minimize the average long-term communication delay incurred to serve the requests of both GNs. We prove that the optimal trajectory in the communication phase operates according to a two-scale optimization: in the \emph{outer optimization}, the UAV selects a target end position, 
which optimizes the trade-off between minimizing the delay of the current request, and minimizing the expected average long-term delay; then, in the \emph{inner optimization}, the UAV travels greedily from the current position to the selected end position while communicating, following
the trajectory that greedily minimizes the communication delay for the current request, provided in closed form.
We utilize a multi-chain policy iteration algorithm to
optimize the selection of the end position in the \emph{communication phase} and the 
trajectory during the \emph{waiting phase}, in which the UAV is
not actively servicing downlink transmission requests. Our numerical results reveal that the UAV should always move towards the geometric center of the two GNs during the waiting phase, and that the optimal trajectory during communication phases becomes increasingly independent of the data payload and only determined by system parameters as the data payload value becomes sufficiently large.

The rest of the paper is organized as follows. In Sec. \ref{section_sys}, we introduce the system model and state the optimization problem; in Sec. \ref{analysis_pol}, we formalize the problem as a semi-Markov decision process (SMDP); in Sec. \ref{numerical_results}, we provide numerical results; lastly, in Sec. \ref{discussion}, we conclude the paper with some final remarks.
\vspace{-2mm}
\section{System Model and Problem Formulation} \label{section_sys}
\subsection{System Model}
Consider the scenario depicted in Fig. \ref{fig:SystemModel}, where one rotary-wing UAV services two ground nodes (GNs) with random downlink\footnote{This formulation and the analysis can be directly applied to uplink transmissions as well.} transmission requests of $L$ bits. The two ground units GN$_1$ and GN$_2$ are located at positions $x_1{=}-a$ and $x_2{=}a$ along the $x$-axis, respectively, both at ground level (height $0$). The UAV moves along the line segment connecting the two GNs, at height $H$ from the ground. We let $q(t){\in}[-a,a]$ be the UAV's position along the $x$-axis at time $t$, and we assume that it is either hovering or moving at speed $V$, hence $|q^\prime(t)|{\in}\{0,V\}$, where $f^\prime$ denotes derivative of $f$ over time.

A base station (BS) connected to the rest of the network is the source of downlink traffic to the two GNs. When a downlink request is generated by a certain GN, the BS transmits the data payload to the UAV, which then relays it to the GN using a decode and forward strategy \cite{1056084}. We assume that the UAV has a high-capacity link to the BS, hence the communication link between the UAV and the GN constitutes the bottleneck of the overall BS-UAV-GN communication. In the rest of the paper, we thus focus on the UAV-GN communication and neglect the delay over the BS-UAV link.
We assume that the UAV transmits at fixed power $P_c$ and that the communication intervals experience LoS links with no probabilistic elements. This is motivated by the fact that UAVs in low-altitude platforms  generally tend to have a much higher occurrence of LoS links \cite{LOSDominance}.
We model the instantaneous communication rate between the UAV in position $q(t)$ and GN$_r, r\in\{1,2\}$ in position $x_r$ as
\begin{equation}\label{eq:RateEqn}
	 R_{r}(q(t)) = B\log_{2} \Bigg(1+ \frac{\gamma}{H^2 + (q(t)-x_r)^2} \Bigg),
\end{equation}
where $H^2 + (q(t)-x_r)^2$ is the squared distance between the UAV and GN$_r$, $B$ is the channel bandwidth, and $\gamma$ is the SNR referenced at $1$ meter (see \cite{EnMin}).

When the UAV has no active transmission requests, future requests arrive according to a Poisson process with mean $\lambda/2$ requests/second, independently at each GN. Each request requires the transmission of $L$ bits to the corresponding destination. Upon receiving a request from GN$_r$, the UAV enters the \emph{communication phase}, where it services it by transmitting the $L$ bits to GN$_r$; any additional requests received during this communication interval are dropped (see also Fig. \ref{fig:SystemModel}). After the data transmission is completed, the UAV enters the \emph{waiting phase}, where it awaits for new requests (with rate $\lambda/2$ for each GN), and the process is repeated indefinitely. During this periodic process of communication and waiting for new requests, the UAV follows a trajectory, part of our design, with the goal to minimize the average long-term communication delay, as discussed next.
   
\begin{figure}[t]
\centering
\includegraphics[width=8.0cm,trim={1mm 1mm 1mm 1mm},clip]{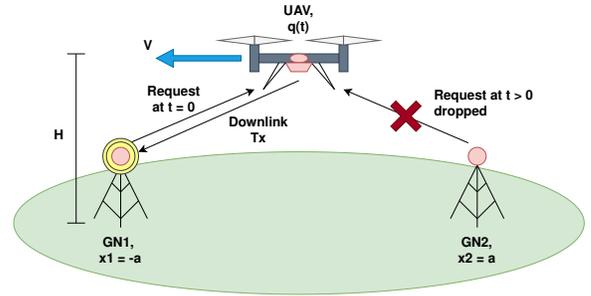}
\caption{System model depicting a downlink transmission request from GN$_1$; the request from GN$_2$ is dropped during the active communication interval.}
\label{fig:SystemModel}
\vspace{-5mm}
\end{figure}

    
\subsection{Problem Formulation}
In this work, we consider the unconstrained delay minimization and neglect the propulsion energy consumption from our problem. In fact, it has been shown that a rotary-wing UAV exhibits comparable energy consumption when either moving or hovering \cite{EnMin}; in the special case when the moving and hovering powers are equal (for instance, based on the model in \cite{EnMin}, this occurs at speed $V{=} 38$ m/s), the finite energy in the UAV battery translates into a constraint on the total service time of the UAV, independent of the trajectory followed. 

The goal is to define the optimal policy (UAV trajectory) so as to minimize the average communication delay. To this end, let $\Delta_l$ be the delay incurred to complete the transmission of the $l$th request serviced by the UAV. Let $M_t$ be the total number of requests served and completed up to time $t$. Then, we define the expected average delay under a given trajectory policy $\mu$ (to be defined), starting from $q(0)=0$ as\footnote{While in practice the operation time of the UAV is constrained by the amount of energy stored in its battery, and the policy \emph{should} depend on the amount of time left, the asymptotic case $t\to\infty$ is convenient since it gives rise to \emph{stationary policies} (i.e., time-independent); this is a good approximation when the dynamics of the waiting and communication phases occur at much faster time scales than the total travel time, i.e., when $M_t$ in \eqref{eq:OverallObj} is large for practical values of the travel time $t$. For perspective, \cite{HoverTime} places typical rotary-wing hovering endurance times in the 15-30 minute range.}
\begin{equation}\label{eq:OverallObj}
	\bar{D}_{\mu} = \lim_{t \rightarrow \infty} \mathbb{E} \Bigg[\frac{\sum_{l = 0}^{M_t - 1} \Delta_l}{M_t}\Bigg|q(0)=0\Bigg].
\end{equation}
We then seek to determine $\mu^*$ to minimize $\bar{D}_{\mu}$, i.e.,
\begin{equation}\label{eq:OverallPol}
	\mu^* = \arg\min_\mu\; \bar{D}_{\mu}.
\end{equation}

Note that this is a non-trivial optimization problem. While the minimum delay to serve a request, say from GN$_1$, is achieved by flying towards GN$_1$ at maximum speed to improve the link quality, this strategy may not be optimal in an \emph{average} delay sense: if the UAV receives a new request from GN$_2$ shortly after completing the request to GN$_1$, the delay to serve this second request may be large due to the large distance that must be covered by the UAV.
\vspace{-3mm}
\subsection{Semi-Markov Decision Process (SMDP) formulation}
In general, a solution to (\ref{eq:OverallPol}) would involve the optimization of an intractable number of variables over time (i.e., all possible trajectories followed by the UAV at any given time), over a continuous state space (the interval $[-a,a]$). Therefore, it is advantageous to approximate the system model through discretization and reformulate (\ref{eq:OverallPol}) as an average-cost SMDP.

We define the state space as $\mathcal{S}{=}\mathcal{I}{\times}\mathcal{R}$, where
$\mathcal{R}{=}\{0,1,2 \}$ denotes the request status, i.e., no active request ($0$), or a request is received from GN$_r$ ($r\in\{1,2\}$), and
\begin{equation}\label{eq:QStateDefinition}
\mathcal{I} \triangleq \left\{-N,N+1,\dots,N-1,N\right\}
\end{equation}
is the set of $2N+1$ indices corresponding to discretized positions $\mathcal Q\triangleq\{q_i=\frac{i}{N}a,\ \forall i\in\mathcal I\}$ along the interval $q(t) \in [-a,a]$. This is a good approximation for sufficiently large $N$, as
$\frac{a}{NV}\lambda\ll 1$, making the expected number of requests received over the travel time between two adjacent discretized positions much smaller than one. It is also useful to further partition the state space into \emph{waiting states},
 $\mathcal S_{\mathrm{wait}}=\mathcal I\times\{0\}$, and \emph{communication states}, $\mathcal S_{\mathrm{comm}}=\mathcal I\times\{1,2\}$.

To define this SMDP, we sample the continuous time interval to define a discrete sequence of states $\{s_n,n\geq 0\}\subseteq\mathcal S$ with the Markov property. We now define the actions available, the transition probabilities, duration and cost of each state visit.

If the UAV is in state $s_n{=}(i,0){\in}\mathcal S_{\mathrm{wait}}$ at time $t$, i.e., it is in the discretized position $q_i$ and there are no active requests, then the actions available are $a_n{\in}\{-1,0,1\}$, i.e.
move right ($a_n{=}1$ to position $q_{i{+}1}$), hover ($a_n{=}0$), or move left by one discretized position ($a_n{=}{-}1$ to $q_{i{-}1}$). The amount of time required to take this action, i.e.,
to fly between two adjacent discretized positions, is
\begin{equation}\label{eq:DeltaDefinition}
	\Delta_{0}\triangleq\frac{a}{NV}.
\end{equation}
The new state is then sampled at time $t{+}\Delta_0$, and is given by
$s_{n+1}{=}(i{+}a_n,r_{n+1})$. The transition probability from state $s_n{=}(i,0)$ under action $a_n\in\{-1,0,1\}$ is then given by
\begin{align}\label{eq:Req0TransitionProb}
&\mathbb P(s_{n+1}{=}(i{+}m,0)|s_n{=}(i,0),a_n{=}m)=e^{-\lambda\Delta_{0}},
\\
&\mathbb P(s_{n+1}{=}(i{+}m,r)|s_n{=}(i,0),a_n{=}m){=}\frac{1{-}e^{-\lambda\Delta_{0}}}{2},\forall\, r{\in}\{1,2\},
\nonumber
\end{align}
depending on whether no request is received during this time interval ($r_{n+1}{=}0$, with probability $e^{-\lambda\Delta_{0}}$), or a request is received from GN$_r$ ($r_{n+1}{=}r\in\{1,2\}$, with probability $[1-e^{-\lambda\Delta_{0}}]/2$ for each GN).

Upon reaching state $s_n{=}(i,r){\in}\mathcal S_{\mathrm{comm}}$ with $r{\in}\{1,2\}$ at time $t$, the UAV has received a request to serve $L$ bits to GN$_r$. The actions available at this point are all trajectories that start from $q_i$ and allow the UAV to transmit the entire data payload of $L$ bits. Assuming a \emph{move and transmit} strategy (see \cite{EnMin}), the selected trajectory $q(\cdot)$ of duration $\Delta$ must satisfy
\begin{equation}\label{eq:RateRequirement}
\int_{0}^{\Delta}R_{r}(q(\tau))d\tau \geq L,
\end{equation}
since all bits need to be transmitted during this phase. Under this trajectory, the communication delay is thus $\Delta$. We define the action space in state $(i,r)\in \mathcal{S}_{\mathrm{comm}}$ as the set of all feasible trajectories,
$\mathcal T_r (i)=\cup_j\mathcal T_r(i\to j)$,
where we have defined $\mathcal T_r(i\to j)$ as the set of feasible trajectories starting in 
${q}_i$, ending in ${q}_j$, and serving GN$_r$, i.e.,
\begin{equation}\label{eq:CommActionTrajSet}
\mathcal T_r(i\to j)
=\biggr\{q:[0,\Delta]\to[-a,a]:\int_{0}^{\Delta}R_{r}(q(\tau))d \tau \geq L,$$$$
|q^\prime(t)|\leq V,\ 
q(0)={q}_i,\ q(\Delta)={q}_j,\ \exists\Delta>0\biggr\}.
\end{equation}
Upon completing the communication phase, the UAV enters the waiting phase again; the new state is then sampled at time $t{+}\Delta$ (the amount of time elapsed to complete the selected trajectory), and is given by
$s_{n+1}{=}(j,0){\in}\mathcal S_{\mathrm{wait}}$, corresponding to
the position ${q}_j$ reached at the end of the communication phase. Thus, we have defined the transition probability in the SMDP from state $s_n=(i,r)$ under action $q\in\mathcal T_r(i\to j)$ as
\begin{align}
\label{eq:CommStateTransition}
&\mathbb P(s_{n+1}=(j,0)|s_n=(i,r),q)=1,\ \forall q\in\mathcal T_r(i\to j).
\end{align}
In other words, the trajectory selection process in the communication phase can be
described via a two-scale decision process: 1) given $(i,r)$, i.e., the current position $q_i$ of the UAV and the request received from GN$_r$, the UAV first selects some $j\in \mathcal I$, which defines the target position $q_j$ to be reached at the end of the communication phase; 2) the UAV selects a feasible trajectory $q$ from $\mathcal T_r(i\to j)$, executes the trajectory while communicating to GN$_r$, and terminates the communication phase in the new position $q_j$, corresponding to state $(j,0)$. After this point, the UAV is in the waiting phase again.

With the states and actions defined, we can define a policy $\mu$. Specifically, for states $(i,0) \in \mathcal{S}_{\mathrm{wait}}$, $\mu(i,0)\in \{-1,0,1\}$. Likewise, for states $(i,r) \in \mathcal{S}_{\mathrm{comm}}$, $\mu(i,r) = (j,q(\cdot))$, where $j\in \mathcal I$ (position reached at the end of the communication phase) and $q(\cdot)\in\mathcal T_r(i\to j)$ (feasible trajectory starting in $q_i$, ending in $q_j$, to serve GN$_r$).

The \emph{communication delay} cost during the \emph{waiting phase} is zero, i.e. $\Delta_{i,0}(m)=0$, for all states $(i,0)\in\mathcal S_{\mathrm{wait}}$ and actions $m\in\{-1,0,1\}$. When the UAV is in a \emph{communicating phase}, we denote the communication delay incurred in state $(i,r)$ under action $(j,q(\cdot))$ as $\Delta_{i,r}(j,q(\cdot))$. Compactly, we write $\Delta_s(\mu(s))$ to denote the delay incurred in state $s \in \mathcal S$ under the action $\mu(s)$ dictated by policy $\mu$.

With this notation, and having now defined a \emph{stationary policy} $\mu$, we can rewrite the average delay $\bar{D}_{\mu}$ in \eqref{eq:OverallObj} in the context of the SMDP as
\begin{equation}\label{eq:OverallObjH}
\bar{D}_{\mu}{=}\lim_{K \rightarrow \infty} \mathbb{E} \Bigg[\frac{
\frac{1}{K}\sum_{n= 0}^{K - 1} \Delta_{s_n}(\mu(s_n))}{
\frac{1}{K}\sum_{n= 0}^{K - 1}\chi(s_n\in\mathcal S_{\mathrm{comm}})
}\Bigg|s_0=(0,0)\Bigg],
\end{equation}
where $\chi(A)$ is the indicator function of the event $A$.
In fact, the numerator in \eqref{eq:OverallObj} counts the sample average delay incurred in the communication phases up to slot $K$ of the SMDP, whereas the denominator in \eqref{eq:OverallObj} counts the sample average number of communication slots in the SMDP up to slot $K$.
Now, using Little's Theorem \cite{LittlesTheorem}, 
we can rewrite \eqref{eq:OverallObjH} as
\begin{align}\label{eq:ExpAvgDelay}
\bar{D}_{\mu}&{=}\frac{\sum_{s \in \mathcal{S}}\Pi_{\mu}(s) \Delta_{s}(\mu(s))}{
\sum_{s \in \mathcal{S}}\!\!\Pi_{\mu}(s)\chi(s{\in}\mathcal{S}_{\mathrm{comm}})}
\nonumber\\&
{=}\frac{\sum_{s \in \mathcal{S}_{\mathrm{comm}}}\Pi_{\mu}(s) \Delta_{s}(\mu(s))}{\sum_{{s \in \mathcal{S}_{\mathrm{comm}}}}\Pi_{\mu}(s)},
\end{align}
where $\Pi_{\mu}(s)$ is the steady-state probability in the SMDP of the UAV being in state $s$ under policy $\mu$, and the second equality holds since $\Delta_{s}(\mu(s))=0$ and  $\chi(s{\in}\mathcal{S}_{\mathrm{comm}}) = 0$ for $s\in\mathcal{S}_{\mathrm{wait}}$.

\section{Policy Optimization and Analysis}\label{analysis_pol}
In this section, we tackle the solution to the optimization problem \eqref{eq:OverallPol}, with $\bar{D}_{\mu}$ given by \eqref{eq:ExpAvgDelay}.
However, \eqref{eq:OverallPol} cannot be directly solved using dynamic programming techniques, due to the presence of the denominator in \eqref{eq:ExpAvgDelay}, which depends on the policy selected $\mu$, hence it affects the optimization. The next lemma demonstrates that the denominator of (\ref{eq:ExpAvgDelay}) can be expressed as a positive constant, \emph{independent} from policy $\mu$ and only dependent on system parameters. In doing so, the optimization of $\mu$ only needs to focus on the minimization of $\sum_{s \in \mathcal{S}}\Pi_{\mu}(s) \Delta_{s}(\mu(s))$, so that \eqref{eq:OverallPol} can be cast as an \emph{average cost per stage problem}, solvable with standard dynamic programming techniques.
\begin{lemma}\label{eq:SSComm}
	Let $\pi_{\mathrm{wait}}$ and $\pi_{\mathrm{comm}}$ be the steady-state probabilities that the UAV is in the waiting and communication phases,
$\pi_{\mathrm{comm}}{=}\sum_{s \in \mathcal{S}_{\mathrm{comm}}}\!\!\!\!\!\Pi_{\mu}(s)$
and $\pi_{\mathrm{wait}}{=}1{-}\pi_{\mathrm{comm}}$. We have that
\begin{equation}\label{eq:SSCommResult}
	\pi_{\mathrm{wait}} = \frac{1}{2-e^{-\lambda \Delta_0}}, \; \pi_{\mathrm{comm}} = \frac{1-e^{-\lambda \Delta_0}}{2-e^{-\lambda \Delta_0}}.
\end{equation}
\end{lemma}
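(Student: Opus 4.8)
The plan is to exploit the observation that, although the full SMDP transition kernel depends on the trajectory policy $\mu$, its projection onto the two ``phases'' (waiting versus communicating) is completely determined by the system parameters. First I would read off two policy-independent facts from the transition probabilities already derived. From \eqref{eq:Req0TransitionProb}: starting from \emph{any} waiting state $(i,0)$ and taking \emph{any} admissible action $m\in\{-1,0,1\}$, the probability of moving into a communication state is $\tfrac{1-e^{-\lambda\Delta_0}}{2}+\tfrac{1-e^{-\lambda\Delta_0}}{2}=1-e^{-\lambda\Delta_0}$ (and of staying in a waiting state is $e^{-\lambda\Delta_0}$), irrespective of $i$ and $m$. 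From \eqref{eq:CommStateTransition}: starting from \emph{any} communication state $(i,r)$ and taking \emph{any} feasible trajectory, the next state is a waiting state with probability one. Thus the aggregated ``phase chain'' has fixed transition probabilities, independent of $\mu$ and of position.

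Next I would let $\Pi_\mu$ denote the steady-state distribution of the embedded chain under the stationary policy $\mu$, and sum the global balance equations $\Pi_\mu(s')=\sum_{s\in\mathcal S}\Pi_\mu(s)\,\mathbb P(s'\mid s,\mu(s))$ over all $s'\in\mathcal S_{\mathrm{comm}}$. Interchanging the two sums and using the two facts above, the right-hand side collapses: every waiting state contributes its mass times $1-e^{-\lambda\Delta_0}$, and every communication state contributes zero (it always returns to the waiting set). This yields the cut-balance identity $\pi_{\mathrm{comm}}=(1-e^{-\lambda\Delta_0})\,\pi_{\mathrm{wait}}$. Combining it with the normalization $\pi_{\mathrm{wait}}+\pi_{\mathrm{comm}}=1$ gives $\pi_{\mathrm{wait}}(2-e^{-\lambda\Delta_0})=1$, hence \eqref{eq:SSCommResult}. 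Policy-independence is then immediate, since neither the cut-balance identity nor the normalization involved $\mu$ in any way.

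The only genuine subtlety, which I would dispatch briefly, is the existence (and, if one wants the limit in \eqref{eq:OverallObjH} to be initial-state-independent, uniqueness) of $\Pi_\mu$: the embedded chain is finite, so a stationary distribution always exists, and under any stationary $\mu$ the recurrent class governing the long-run averages in \eqref{eq:ExpAvgDelay} is reached from $(0,0)$, so the cut-balance argument applies to the relevant distribution. Since $\lambda,\Delta_0>0$ forces $1-e^{-\lambda\Delta_0}>0$, both $\pi_{\mathrm{wait}}$ and $\pi_{\mathrm{comm}}$ are strictly positive, which is exactly what is needed downstream for the denominator of \eqref{eq:ExpAvgDelay} to be a well-defined positive constant. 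I expect this existence/irreducibility bookkeeping, rather than the short algebra, to be the part that most needs care in the write-up.
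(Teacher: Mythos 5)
Your proof is correct and follows essentially the same route as the paper: both reduce the problem to the two-state phase chain with transition probabilities $e^{-\lambda\Delta_0}$, $1-e^{-\lambda\Delta_0}$, $1$, $0$ and solve the balance equation together with normalization. Your write-up is merely more explicit about why the aggregation is legitimate (summing the full-chain balance equations over $\mathcal S_{\mathrm{comm}}$ and noting policy-independence of the phase-level kernel), which the paper asserts directly.
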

\begin{proof}
Let $p_{ww}$, $p_{wc}$, $p_{cw}$, and $p_{cc}$ be the probabilities of a state request status, $r \in \mathcal{R} = \{0,1,2\}$, transitioning in the SMDP as $0{\rightarrow}0$, $0{\rightarrow}\{1,2\}$, $\{1,2\}{\rightarrow}0$, and $\{1,2\}{\rightarrow}\{1,2\}$, respectively. Then, $p_{ww} = e^{-\lambda \Delta_0}$ (if no request is received, the SMDP remains in the waiting state), $p_{wc}=1-p_{ww}$, $p_{cw} = 1$, and $p_{cc} = 0$ (if the SMDP is in the communication state, the next state of the SMDP will be a waiting state, see \eqref{eq:CommStateTransition}). Therefore, the steady-state probabilities of  being in the waiting and communication states, $\pi_{\mathrm{wait}}$ and $\pi_{\mathrm{comm}}$, satisfy
\begin{align*}
&		\pi_{\mathrm{wait}} = p_{ww}\pi_{\mathrm{wait}} + p_{cw}\pi_{\mathrm{comm}} \nonumber
		=e^{-\lambda \Delta_0}\pi_{\mathrm{wait}} + \pi_{\mathrm{comm}},
		\\&
		\pi_{\mathrm{comm}} = p_{wc}\pi_{\mathrm{wait}} + p_{cc}\pi_{\mathrm{comm}}
		=(1-e^{-\lambda \Delta_0})\pi_{\mathrm{wait}}, \nonumber
\\&
		\pi_{\mathrm{wait}} + \pi_{\mathrm{comm}} = 1,
\end{align*}
whose solution is given as in the statement of the lemma.
\end{proof}

When we refer to the denominator of (\ref{eq:ExpAvgDelay}), it is evident that it is equal to the steady-state probability that the UAV is in a communication state while following policy $\mu$, $\pi_{\mathrm{comm}}$. However, with the result of Lemma \ref{eq:SSComm}, $\pi_{\mathrm{comm}}$ is simply a positive constant determined by system parameters, yielding
\begin{equation}\label{eq:ExpAvgDelaySimplified}
	\bar{D}_{\mu} = \frac{\sum_{s \in \mathcal{S}}\Pi_{\mu}(s) \Delta_{s}(\mu(s))}{\pi_{\mathrm{comm}}},
\end{equation}
which we now aim to minimize with respect to policy $\mu$.

As the problem stands now, the \emph{communication} phase selects an action from $\mathcal{T}_r(i)$, which is a set containing an uncountable number of trajectories. By exploiting the two-scale structure of the problem outlined earlier, we now demonstrate that only a finite set of trajectories from $\mathcal{T}_r(i)$ are eligible to be optimal, for each state $(i,r) \in \mathcal{S}_{\mathrm{comm}}$, hence making the problem a \emph{finite state and action} SMDP.
\vspace{-3mm}
\subsection{Decomposition of Policy $\mu$}
Note from \eqref{eq:CommStateTransition} that the transition probability from a communication state $s_n{=}(i,r)$ under action $(j,q(\cdot))$ is only affected by the selection of $j$ and not the particular trajectory $q(\cdot){\in}\mathcal{T}_r(i \rightarrow j)$ that leads from $q_i$ to $q_j$ during the communication phase. It follows that the steady-state probability $\Pi_{\mu}$ under $\mu(i,r){=}(j,q(\cdot))$ is only affected by the selection of $j$ and not the specific trajectory within $\mathcal{T}_r(i \rightarrow j)$.

By establishing this property, we decompose the policy $\mu$ into the \emph{waiting policy} $\theta(i){\in}\{-1,0,1\}$, which defines the optimal action in state $(i,0){\in}\mathcal S_{\mathrm{wait}}$ of the waiting phase;
the \emph{end position policy} $J(i,r)$, which selects the end position $q_j$ with $j{=}J(i,r)$ to be reached at the end of the communication phase; and the \emph{trajectory policy} $\rho(i,r,j)$ which, given $j{=}J(i,r)$, selects a trajectory $q(\cdot){=}\rho(i,r,j)$ from $\mathcal T_r(i\to j)$. 
Owing to the independence of $\Pi_{\mu}$ on the trajectory policy $\rho$, the delay minimization problem can then be rewritten as
\begin{equation*}
	\bar{D}_{\mu}^* = \frac{\min\limits_{\theta,J}\, \sum\limits_{s \in \mathcal{S}_{\mathrm{comm}}}  \, \Pi_{\theta,J}(s)\min\limits_{\rho(s,J(s)))} \Delta_{s}(J(s),\rho(s,J(s)))}{\pi_{\mathrm{comm}}}.
\end{equation*}
Letting
\begin{align}\label{eq:OfflineTrajOpt}
\Delta_r^*(i,j){\triangleq}\!\!\!\!\!\!\min\limits_{q(\cdot) \in \mathcal{T}_r(i \rightarrow j)}\!\!\!\!\!\Delta_{i,r}(j,q), \;\; \forall\ (i,r){\in}\mathcal S_{\mathrm{comm}},\forall j\in\mathcal I,
\end{align}
we can finally write
\begin{align}\label{eq:FinalMinSimplification2}
	\bar{D}_{\mu}^* = \frac{\min\limits_{\theta,J}\, \sum_{(i,r) \in \mathcal{S}_{\mathrm{comm}}}  \, \Pi_{\theta,J}(i,r)\Delta_r^*(i,J(i,r))}{\pi_{\mathrm{comm}}}.
\end{align}

Note that $\Delta_r^*(i,j)$ yields the trajectory
that greedily minimizes the communication delay when 
 starting from state $(i,r)$, ending in position $q_{j}$ while serving GN$_r$.
 This result proves that, for any communication state $(i,r)$, there exist only 
 $2N+1$ trajectories that are eligible to be optimal, one for each possible ending position $q_j\in\mathcal Q$. Hence, the problem is finally reduced to that of finding the optimal waiting policy $\theta$ and end position policy $J$, which can be solved efficiently via dynamic programming (Algorithm \ref{alg:PI}). In the next section, we provide a closed form expression of $\Delta_r^*(i,j)$.
\vspace{-3mm}
\subsection{Closed-form Delay Minimizing Trajectory}
With the  independence of the steady-state probabilities from $\rho$, we can proceed to solve \eqref{eq:OfflineTrajOpt} and then provide the dynamic programming algorithm to solve for $\theta^*$ and $J^*$ in \eqref{eq:FinalMinSimplification2}. By definition of $\mathcal{T}_r(i {\rightarrow} j)$ in \eqref{eq:CommActionTrajSet}, 
we can rewrite $\Delta_r^*(i,j)$ as
\begin{align}\label{eq:TrajMinDelta}
\nonumber
	\Delta_r^*(i,j)= \min\limits_{\Delta,q} &\Big\{\Delta \,\Bigr| \, \int_{0}^{\Delta} R_{r}(q(\tau))d\tau \geq L, \\& |q^\prime(\tau)| \leq V, q(0) = q_i, q(\Delta) = q_j \Big\}.
\end{align}
The minimizer $q^*$ is the trajectory that the UAV should follow when receiving a request from GN$_r$ starting in position $q_i$ and 
ending in position $q_j$, selected by the end position policy $J$.

In defining the optimal trajectory, the following definitions will be useful. Let 
$\tau_{p_1,p_2}{\triangleq}\frac{|p_2-p_1|}{V}$
be the time needed to fly at maximum speed from $p_1$ to $p_2{\in}[-a,a]$. Along this straight trajectory, let
\begin{equation}\label{eq:ellFirst}
\ell_{p_1,p_2}^{(r)} \triangleq \int_0^{\tau_{p_1,p_2}}R_{r}\left(p_1+\frac{\tau}{\tau_{p_1,p_2}}(p_2-p_1)\right)d\tau
\end{equation}
be the amount of bits transmitted to serve GN$_r$.

 Clearly, $\ell_{p_1,p_1}^{(r)}{=}0$
  ($\tau_{p_1,p_1}{=}0$), $\ell_{p_1,p_2}^{(r)}{=}\ell_{p_2,p_1}^{(r)}$ ($\tau_{p_1,p_2}{=}\tau_{p_2,p_1}$), and $\ell_{p_1,p_2}^{(1)}{=}\ell_{-p_1,-p_2}^{(2)}$ ($\tau_{p_1,p_2}{=}\tau_{-p_1,-p_2}$).
The integral $\ell_{p_1,p_2}^{(r)}$ can be determined in closed form and is found in \cite{EnEffDeployment}, for example.
We also define the trajectory $\upsilon\{p_1{\to}(p_2,\delta){\to}p_3\}(\tau),\ \tau\in[0,\delta{+}\tau_{p_1,p_2}{+}\tau_{p_2,p_3}]$,
 as the one in which the UAV starts at position $p_1$, flies at maximum speed to
$p_2$, hovers at $p_2$ for $\delta$ amount of time, and finally flies at maximum speed from $p_2$ to $p_3$. Mathematically,
	\begin{align}
&	\upsilon\{p_1\to (p_2,\delta)\to p_3\}(\tau)\\&{=}\!
	\begin{cases}
	p_1+\frac{\tau}{\tau_{p_1,p_2}}(p_2-p_1),
	\!\!&\tau{\in}[0,\tau_{p_1,p_2}]\\
	p_2,\!\!&\tau{\in}[\tau_{p_1,p_2},\tau_{p_1,p_2}+\delta]\\
	p_2{+}\frac{\tau{-}\tau_{p_1,p_2}{-}\delta}{\tau_{p_2,p_3}}(p_3{-}p_2),
	\!\!&\tau{\in}[\tau_{p_1,p_2}{+}\delta,\tau_{p_1,p_2}{+}\tau_{p_2,p_3}{+}\delta].
	\end{cases}
	\nonumber
	\end{align}
The traffic delivered to GN$_r$ when following this trajectory is
$\ell_{p_1,p_2}^{(r)}\!\!{+}\delta R_{r}(p_2){+}\ell_{p_2,p_3}^{(r)}$, with delay $\tau_{p_1,p_2}\!\!{+}\delta{+}\tau_{p_2,p_3}$.
With these definitions, we are now ready to state the main result.
\begin{theorem}\label{th:OptTrajCase1}
	Let $q^*(\cdot){\in}\mathcal{T}_r(i \rightarrow j)$ be the trajectory that minimizes the communication delay $\Delta_r^*(i,j)$. 
	If $\ell_{q_i,q_j}^{(r)}{\geq}L$, then
\begin{equation}\label{eq:QOptForGEQL}
	q^*(\cdot)=\upsilon\{q_i\to (q_j,0)\to q_j\}(\cdot),\ \Delta_r^*(i,j)=\tau_{q_i,q_j},
\end{equation}
	i.e., the UAV flies at maximum speed from $q_i$ to $q_j$ without interruption; otherwise, if $\ell_{q_i,x_r}^{(r)}+\ell_{x_r,q_j}^{(r)} \leq L$, then
\begin{equation*}
q^*(\cdot){=}\upsilon\{q_i\to (x_r,\delta^*){\to}q_j\}(\cdot),\ \Delta_r^*(i,j){=}\tau_{q_i,x_r}{+}\tau_{x_r,q_j}{+}\delta^*,
\end{equation*}
where 
\begin{equation}\label{eq:DStarDef}
\delta^*=\frac{L-\ell_{q_i,x_r}^{(r)}-\ell_{x_r,q_j}^{(r)}}{R_{r}(x_r)};
\end{equation}
i.e., the UAV flies at maximum speed from $q_i$ to $x_r$, hovers over $x_r$ for $\delta^*$ amount of time, and then flies to $q_j$;
finally, if $\ell_{q_i,x_r}^{(r)}+\ell_{x_r,q_j}^{(r)} > L$, but $\ell_{q_i,q_j}^{(r)} < L$, then
\begin{equation*}
q^*(\cdot)=\upsilon\{q_i\to (p^*,0)\to q_j\}(\cdot),\ \Delta_r^*(i,j)=\tau_{q_i,p^*}+\tau_{p^*,q_j},
\end{equation*}
where $p^*$ is the unique solution in 
$[x_r,\min\{q_i,q_j\}]$ (if $r{=}1$)
or
$[\max\{q_i,q_j\},x_r]$ (if $r{=}2$)
 of
$\ell_{q_i,p^*}^{(r)}{+}\ell_{p^*,q_j}^{(r)}{=}L$;
i.e., the UAV flies at maximum speed towards $x_r$ to the farthest point $p^*$ and then back to $q_j$, with $p^*$ uniquely defined in such a way as to transmit exactly the data payload upon reaching $q_j$.
\end{theorem}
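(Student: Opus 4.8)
The plan is to reduce the infinite-dimensional problem \eqref{eq:TrajMinDelta} to a one-dimensional minimization over a single scalar — the farthest point visited by the trajectory — via a rearrangement argument, and then to solve that scalar problem in closed form using monotonicity of the rate. First I would record the structural facts about the rate: since $R_r(q)$ is a strictly decreasing function of the squared distance $(q-x_r)^2$, which is itself strictly monotone in $q$ on $[-a,a]$, the map $q\mapsto R_r(q)$ is continuous and strictly monotone on $[-a,a]$ (strictly decreasing for $r=1$, strictly increasing for $r=2$), with its unique maximum at $q=x_r$. Because $R_1(q)=R_2(-q)$ and $-q_i=q_{-i}\in\mathcal Q$, the substitution $q(\tau)\mapsto-q(\tau)$ turns the instance $(r{=}2,\,q_i\!\to\!q_j)$ into $(r{=}1,\,-q_i\!\to\!-q_j)$; hence it suffices to treat $r=1$ (so $x_1=-a$ and $R_1$ decreasing), and the $r=2$ claim, including the interval $[\max\{q_i,q_j\},x_r]$ for $p^*$, follows by reflection. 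The first case is then immediate: if $\ell_{q_i,q_j}^{(1)}\ge L$, the straight flight from $q_i$ to $q_j$ is feasible and its duration $\tau_{q_i,q_j}$ lower-bounds the duration of any admissible trajectory with $q(0)=q_i$, $q(\Delta)=q_j$, $|q'|\le V$, giving \eqref{eq:QOptForGEQL}.

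The core step is a reduction to the canonical three-leg trajectories $\upsilon\{q_i\to(c,\delta)\to q_j\}$. Given any feasible $\tilde q:[0,\tilde\Delta]\to[-a,a]$, set $c:=\min_\tau\tilde q(\tau)\le\min\{q_i,q_j\}$ and introduce its occupation measure $\mu$ on $[-a,a]$ (so $\int g\,d\mu=\int_0^{\tilde\Delta}g(\tilde q(\tau))\,d\tau$ for $g\ge0$). The co-area/Banach-indicatrix inequality gives $V\mu\ge N(q)\,dq$ as measures, where $N(q)$ counts the times $\tilde q$ passes through $q$; by the intermediate value theorem $N(q)\ge1$ on $(\min\{q_i,q_j\},\max\{q_i,q_j\})$ and $N(q)\ge2$ on the image outside $[\min\{q_i,q_j\},\max\{q_i,q_j\}]$ (the path descends to $c$ and must return above $\min\{q_i,q_j\}$). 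From this I would deduce, first, $\tilde\Delta=\mu([-a,a])\ge\tau_{q_i,c}+\tau_{c,q_j}$, so $\delta:=\tilde\Delta-\tau_{q_i,c}-\tau_{c,q_j}\ge0$; and second, writing $\mu=\frac{1}{V}N(q)\,dq+\nu$ with $\nu\ge0$ supported on $[c,\max_\tau\tilde q(\tau)]$ — where $R_1\le R_1(c)$ since $R_1$ is decreasing — the payload delivered by $\upsilon\{q_i\to(c,\delta)\to q_j\}$, namely $\ell_{q_i,c}^{(1)}+\delta R_1(c)+\ell_{c,q_j}^{(1)}$, is at least $\int_0^{\tilde\Delta}R_1(\tilde q(\tau))\,d\tau\ge L$. (This is the only place strict monotonicity of $R_1$ is essential: re-locating every unit of ``extra'' occupation and every rightward overshoot to a hover at $c$ can only increase the delivered payload.) Since the canonical trajectory has the same duration $\tilde\Delta$, the infimum in \eqref{eq:TrajMinDelta} is unchanged by restricting to canonical trajectories, i.e.
\[
\Delta_1^*(i,j)=\min_{\substack{c\in[-a,\min\{q_i,q_j\}],\ \delta\ge0\\ \ell_{q_i,c}^{(1)}+\delta R_1(c)+\ell_{c,q_j}^{(1)}\ge L}}\big(\tau_{q_i,c}+\delta+\tau_{c,q_j}\big),
\]
the minimum being attained by continuity and compactness ($\mathcal T_r(i\to j)\ne\emptyset$ since flying to $x_r$ and hovering long enough always works).

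It remains to solve this scalar problem. Eliminating $\delta$ at its binding value, define $G(c):=\ell_{q_i,c}^{(1)}+\ell_{c,q_j}^{(1)}$ and $D(c):=\tau_{q_i,c}+\tau_{c,q_j}+\big(L-G(c)\big)^+/R_1(c)$ on $c\in[-a,\min\{q_i,q_j\}]$. Here $\tau_{q_i,c}+\tau_{c,q_j}=(q_i+q_j-2c)/V$ is strictly decreasing in $c$, while $G$ is strictly decreasing ($G'(c)=-2R_1(c)/V<0$) with $G(\min\{q_i,q_j\})=\ell_{q_i,q_j}^{(1)}$ and $G(-a)=\ell_{q_i,x_1}^{(1)}+\ell_{x_1,q_j}^{(1)}$. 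If $\ell_{q_i,x_1}^{(1)}+\ell_{x_1,q_j}^{(1)}\le L$, the $(\cdot)^+$ is active throughout, and using $L-G(c)=\big(L-G(-a)\big)+\frac{2}{V}\int_{-a}^c R_1$ together with $\int_{-a}^c\big(R_1(q)-R_1(c)\big)dq\ge0$ and $R_1(-a)\ge R_1(c)$, a short computation gives $D(c)\ge D(-a)$ for all $c$, which is the middle case with $\delta^*$ as in \eqref{eq:DStarDef}. Otherwise $\ell_{q_i,q_j}^{(1)}<L<\ell_{q_i,x_1}^{(1)}+\ell_{x_1,q_j}^{(1)}$, and the intermediate value theorem with strict monotonicity of $G$ produces a unique $p^*\in(-a,\min\{q_i,q_j\})$ with $G(p^*)=L$; for $c\le p^*$ the $(\cdot)^+$ vanishes and $D$ is decreasing, so $c=p^*$ beats all such $c$, while for $c>p^*$ the same estimate ($\int_{p^*}^c\big(R_1(q)-R_1(c)\big)dq\ge0$) yields $D(c)\ge D(p^*)$; hence the minimizer is $c=p^*,\ \delta=0$, which is the last case.

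The step I expect to be delicate is the rearrangement in the second paragraph: handling an \emph{arbitrary} Lipschitz trajectory (rather than a piecewise-linear one) cleanly requires the occupation-measure/co-area bookkeeping so that both the duration and the delivered payload are controlled \emph{simultaneously} when oscillations and rightward overshoot are removed. Once this reduction is in place, the rest is a one-variable optimization driven purely by the monotonicity of $R_1$.
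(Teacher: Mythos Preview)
Your argument is correct, and the scalar optimization in your last paragraph matches what the paper (implicitly) does in step~3 of its outline. The reduction step, however, is genuinely different. The paper fixes the \emph{duration} $\Delta$ of an arbitrary feasible trajectory $q(\cdot)$ and compares it to the canonical trajectory that flies at maximum speed toward $x_r$, hovers (if time permits), and returns to $q_j$, all in the same total time $\Delta$; the key observation is the \emph{pointwise-in-time} domination $|q(t)-x_r|\ge|\tilde q(t)-x_r|$ for all $t$, which follows immediately from $|q'|\le V$ together with the endpoint constraints (for $r=2$, $q(t)\le\min\{q_i+Vt,\ q_j+V(\Delta-t),\ a\}=\tilde q(t)$). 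Integrating $R_r$ then gives feasibility of $\tilde q$ directly, with no measure theory needed. You instead anchor the comparison at the \emph{extreme point} $c=\min_\tau\tilde q(\tau)$ of the given trajectory and use the occupation measure/co-area bound $V\mu\ge N(q)\,dq$ to show that relocating all ``excess'' occupation to a hover at $c$ cannot decrease the payload. Both routes collapse the problem to the same one-parameter family $\upsilon\{q_i\to(p,\delta)\to q_j\}$; the paper's pointwise comparison is shorter and more elementary, while your occupation-measure argument is more robust (it does not rely on matching the two trajectories time-slice by time-slice) and makes the monotonicity of $R_r$ the sole structural ingredient.
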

\begin{proof}\label{pf:minTrajectory}
Due to space limitations, we provide an outline of the proof. 
Assume $r{=}2$ (a similar argument applies to $r{=}1$ by symmetry).
1) for any trajectory $q(\cdot)\in\mathcal T_2(i\to j)$ of duration $\Delta$, one can construct another trajectory $\tilde q(\cdot)\in\mathcal T_2(i\to j)$
of same duration $\Delta$, and such that $|q(t) - x_r| \geq |\tilde q(t) - x_r|,\ \forall t\in[0,\Delta]$; such trajectory is obtained by flying at maximum speed towards GN$_2$, possibly hovering on top of GN$_2$ for $\delta$ amount of time (if time allows), and then returning to $q_j$, yielding $\tilde q(\cdot){=}\upsilon\{q_i{\to}(p^*,\delta^*){\to}q_j\}(\cdot)$, for a proper choice of $p^*$ and $\delta^*$ such that $\tau_{q_i,p^*}{+}\tau_{p^*,q_j}{+}\delta{=}\Delta$;
2) note that the UAV is always closer to GN$_2$ under $\tilde q(\cdot)$ than it is under $q(\cdot)$, hence it delivers a larger data payload than $q(\cdot)$ while incurring the same delay; therefore, $q(\cdot)$ is suboptimal;
3) $\tilde q(\cdot)$ can be further improved by minimizing the delay (by optimizing $(p^*,\delta^*)$), yielding the three cases provided in the statement of the theorem.
\end{proof}
\vspace{-3mm}
\subsection{Multi-chain Policy Iteration (PI) Algorithm}
We opt to use a multi-chain PI algorithm to solve \eqref{eq:FinalMinSimplification2}, as
 there exist some policies whose induced Markov chain structures are multi-chain. For example, if the \emph{waiting policy} is $\theta(i) = 0$, and the \emph{end position policy} is $J(i,r) = i$, then the induced Markov chain has $2N+1$ recurrent classes (hence multi-chain). To accommodate this structure, the pseudocode that follows is based upon the multi-chain PI methods of \cite{PutermanMDP} and succinctly describes how to solve for $\mu^*$.

In Algorithm \ref{alg:MultichainPI}, we use a vector notation for $\mathbf{\bar{D}}_{k}$ and $\mathbf{h}_{k}$, which denote the average delay and relative value for all states, respectively, following the $k$th policy iterate $\mu_{(k)}$. Likewise, $\mathbf{c}_{\mu}$ is the vector notation for the delay cost function under policy $\mu$, supplemented by the optimal minimized communication delays described by \eqref{eq:OfflineTrajOpt} and \eqref{eq:TrajMinDelta}, and $\mathbf{P}_{\mu}$ is the transition matrix under policy $\mu$.

\begin{algorithm}
\caption{Multi-chain PI to solve \eqref{eq:FinalMinSimplification2}}
\label{alg:PI}
\begin{algorithmic}[1]
\State Initialize $k =-1$, arbitrary policy $\mu_{(0)}$;

\Repeat

\State $k \gets k+1$

\State \emph{Evaluation}: Solve for gain $\mathbf{\bar{D}}_{k}$ and relative value $\mathbf{h}_{k}$ under policy $\mu_{(k)}$ by gain-relative value optimality equations  \cite{PutermanMDP};

\State \emph{Improvement}: Find $\mu_{(k+1)}{\in}\arg\min_{\mu} \, \big\{\mathbf{P}_{\mu} \mathbf{\bar{D}}_{k} \big\}$;
choose $\mu_{(k+1)} = \mu_{(k)}$ if $\min_{\mu} \, \big\{\mathbf{P}_{\mu} \mathbf{\bar{D}}_{k}\big\}=\mathbf{P}_{\mu_{(k)}} \mathbf{\bar{D}}_{k}$;

\If {$\mu_{(k+1)} = \mu_{(k)}$}
\State Find $\mu_{(k+1)}{\in}\arg\min_{\mu} \, \big\{\mathbf{c}_{\mu}{+}\mathbf{P}_{\mu}\mathbf h_{k} \big\}$;
choose $\mu_{(k+1)}{=}\mu_{(k)}$ if $
\min_{\mu} \, \big\{\mathbf{c}_{\mu}{+}\mathbf{P}_{\mu}\mathbf h_{k} \big\}
=\mathbf{c}_{\mu_{(k)}}{+}\mathbf{P}_{\mu_{(k)}}\mathbf h_{k}$;
\EndIf

\Until $\mu_{(k+1)} = \mu_{(k)}$; return $\mu^*=\mu_{(k+1)}$.

\end{algorithmic}
\label{alg:MultichainPI}
\end{algorithm}
\vspace{-3mm}
\section{Numerical Results}\label{numerical_results}



We use the following system parameters, unless specified otherwise: number of states $2N{+}1{=}101$; channel bandwidth $B{=}1\mathrm{MHz}$; $1$-meter reference SNR $\gamma_{\mathrm{dB}} {=}40\mathrm{dB}$; UAV height $H{=}100\mathrm m$; GN locations $x_1{=}-400\mathrm m$, $x_2{=}400\mathrm m$; UAV speed
 $V{=}20\mathrm{m/s}$; and request arrival rate $\lambda{=}0.4$ requests/second.

We vary the data payload $L$ across a range of values and find numerically that, regardless, the optimal policy in the \emph{waiting phase} optimized with Algorithm \ref{alg:PI} is
\begin{equation}\label{eq:WaitPolicy}
 \theta^{*}(i)
{=}\left\{
\begin{array}{ll}
1, & i \in \{-N,-N+1,...,-1\} \\
0, & i = 0 \\
-1, & i \in \{1,2,...,N\}.
\end{array}
\right.
\end{equation}
In other words, in the \emph{waiting phase}
 it is optimal for the UAV to move towards the geometric center of the two GNs along the line segment connecting the two. Intuitively, the UAV can more readily service a request that is
originated equally likely from GN$_1$ or GN$_2$, if it is located in the geometric center when the request arrives, since the UAV is equally distant from both GNs, and can thus serve them equally well.

\begin{figure}[t]
\centering
\includegraphics[width=.8\linewidth]{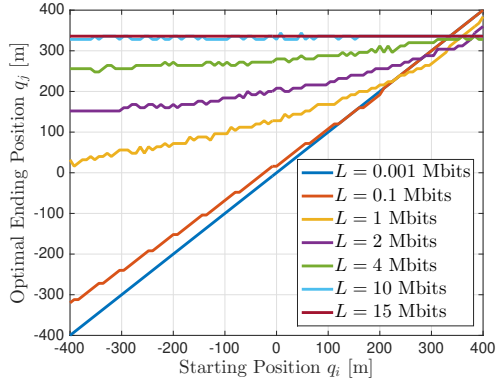}
\caption{End position in the communication phase as a function of the start position under the optimal policy, when transmitting to GN$_2$ in position $a$, for different values of the data payload. The small fluctuations are due to the discretization of the state space.}
\label{fig:OptimalReq1Policy}
\vspace{-5mm}
\end{figure}

In Fig. \ref{fig:OptimalReq1Policy}, we plot the optimal 
\emph{end position policy} $J^*(i,2)$ for different data payload values.\footnote{We omit the figure for states $(i,1) \in \mathcal{S}_{\mathrm{comm}}$, due to the inherent symmetry of the problem. Specifically, if the optimal end point $J^*(i,2) = j$ is observed, then $J^*(-i,1) = -j$ is also observed.}
We note that, for large data payload values $L$, the optimal
end position in the communication phase becomes independent of the initial position $i$ (in this case, $J^*(i,2){\approx} 336\mathrm m$, irrespective of $i$ for $L{\gg} 1$).
In fact, for large data payload $L$, the UAV hovers over the receiver for a significant amount of time during the communication phase (case $\ell_{q_i,x_r}^{(r)}{+}\ell_{x_r,q_j}^{(r)} {\leq} L$ in Theorem~\ref{th:OptTrajCase1}),
hence the final part of the trajectory from $x_r$ to the selected end position $q_j$ becomes irrespective of the actual data payload value. However, $J^*(i,2)$ does depend on other system parameters, such as the request rate $\lambda$ and UAV height $H$, as seen in Fig.~\ref{fig:EndPosAndLambda}.
Interestingly, as the request rate increases (the inter-arrival request time $1/\lambda$ decreases) the end position is closer to the geometric center (i.e., farther away from the receiver); this is because requests arrive more often, hence it is desirable for the UAV to terminate the communication phase closer to the center, in order to more readily serve future requests.

\begin{figure}[t]
\centering
\includegraphics[width=.8\linewidth]{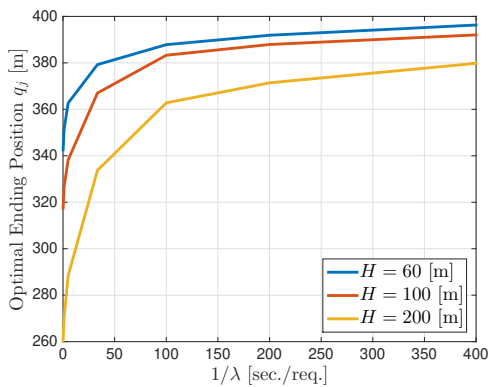}
\caption{End position for all states $(i,2)$ in the communication phase as a function $1/\lambda$, when transmitting to GN$_2$ in position $a$, for a fixed large data payload $L = 15$ Mbits (varied across UAV height).
\vspace{-7mm}}
\label{fig:EndPosAndLambda}
\end{figure}

Next, we illustrate how the optimal expected average delay $\bar{D}_{\mu}^*$, across the same set of data payload values, fares against a heuristic policy which operates as follows: in the \underline{waiting phase}, hover in the current position; in the \underline{data communication phase}, greedily minimize the delay by flying at maximum speed towards the receiver until completion. The comparison between the optimal policy $\mu^*$ and the heuristic policy is shown for the span of data payload values in Fig. \ref{fig:HeuristicComparison}. Note that the slope of the line for both the optimal and heuristic policies saturates to $[B \log_2 (1{+}\gamma/H^2)]^{-1}$. In fact, when $L{\gg}1$, the UAV spends most of the communication time hovering above the receiver (case $\ell_{q_i,x_r}^{(r)}{+}\ell_{x_r,q_j}^{(r)}{\leq}L$ in Theorem~\ref{th:OptTrajCase1}), hence
$\Delta_r^*(i,j){\approx}\frac{L}{R_{r}(x_r)}$ in \eqref{eq:FinalMinSimplification2}, yielding
\begin{equation*}
	\bar{D}_{\mu}^* \approx
	\frac{\min\limits_{\theta,J}\, \sum_{(i,r) \in \mathcal{S}_{\mathrm{comm}}}  \, \Pi_{\theta,J}(i,r)L}{\pi_{\mathrm{comm}}B \log_2 (1+\gamma/H^2)}
	=
	\frac{L}{B \log_2 (1+\gamma/H^2)}.
\end{equation*}
Overall, the heuristic scheme performs worse, roughly by $2$ seconds for large $L$. 
In fact, when hovering during the waiting phase instead of moving towards the center, the UAV incurs a larger delay to serve a request generated by the more distant GN, due to the longer distance that needs to be covered.
\vspace{-2mm}
\section{Conclusions}\label{discussion}
In this paper, we studied the trajectory optimization problem of one UAV servicing random downlink transmission requests by two GNs, to minimize the expected communication delay. We formulated  the problem as an SMDP, and
exploited the structure of the problem to simplify the trajectory design in the communication phase. We showed that the problem exhibits an interesting two-scale structure in the optimal trajectory design, and can be solved efficiently via dynamic programming. 
Numerical evaluations demonstrate consistent improvements in the delay performance over a sensible heuristic, for a variety of data payload values.

\begin{figure}[t]
\centering
\includegraphics[width=.8\linewidth]{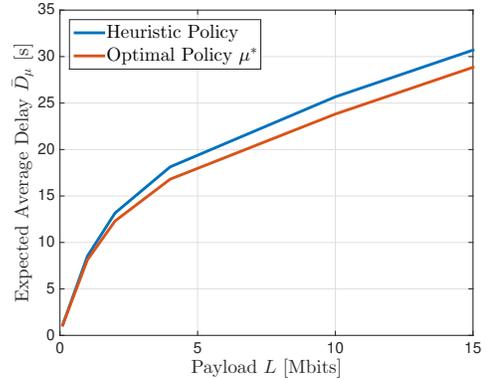}
\caption{Comparison of expected average delay $\bar{D}_{\mu}$ vs. data payload $L$ for both optimal and heuristic policy.\vspace{-5mm}}
\label{fig:HeuristicComparison}
\end{figure}

\vspace{-2mm}
\bibliographystyle{IEEEtran}
\bibliography{IEEEabrv,ref} 

\end{document}